\newtheorem{mytheorem}{Theorem}
\newtheorem{mylemma}{Lemma}
\newtheorem{myreduction}{Reduction}
\newtheorem{myobservation}{Observation}
\newcommand{\myqed}{\mbox{$\Box$}}
\newcommand{\mylike}{\textsc{Li\-ke}}
\newcommand{\myblike}{\textsc{Ba\-la\-n\-ced Li\-ke}}
\newcommand{\mynecessary}{\mbox{\sc N\-e\-ce\-s\-sa\-ry}}
\newcommand{\myexact}{\mbox{\sc Ex\-act}}
\newcommand{\myutility}{\mbox{\sc U\-ti\-li\-ty}}
\newcommand{\mystochastic}{\mbox{\sc Sto\-ch\-as\-tic}}
\begin{document} 

\title{Expected Outcomes and Manipulations \\ in Online Fair Division}
\author{Martin Aleksandrov \and Toby Walsh}
\institute{UNSW Sydney and Data61, CSIRO and TU Berlin \email{\{martin.aleksandrov,toby.walsh\}@data61.csiro.au}}

\maketitle

\begin{abstract}
Two simple and attractive mechanisms for
the fair division of indivisible goods in an online
setting are \mylike\ and \myblike.
We study some fundamental computational problems concerning the outcomes of these mechanisms. In particular, we consider what expected outcomes are \emph{possible}, what outcomes are \emph{necessary} and how to compute their \emph{exact} outcomes. In general, we show that
such questions are more tractable to compute for
\mylike\ than for \myblike. 
As \mylike\ is strategy proof but \myblike\ is
not, we also consider the computational problem
of how, with \myblike, 
an agent can compute a strategic bid to
improve their outcome. We prove
that this problem is intractable in general. 
\end{abstract}

\section{Introduction}\label{sec:intro}

Fair division is a fundamental problem in allocating resources among competing agents. Many practical fair division problems 
are online. We present two such settings. For example, in a food bank, we must start allocating food as it
is donated. It is too late to wait until the end of the day
before we start distributing the food to charities.
As a second example, in allocating deceased organs to patients
we must match newly donated organs swiftly. 
We cannot wait till more organs arrive before deciding on the
precise match.

Motivated by such problems, Walsh has proposed a simple online model for the fair division of indivisible items in which the items arrive over time \cite{walsh2015}. 
Aleksandrov et al. analysed two simple and attractive
randomized mechanisms
for such fair division problems: \mylike\ and \myblike\
\cite{aleksandrov2015ijcai}. 
The \mylike\ mechanism allocates
an arriving item uniformly at random between the
agents that ``like'' it. It satisfies equal treatment of equals, and
it is both strategy proof
and envy free ex ante 
\cite{aleksandrov2015ijcai}. Indeed, any mechanism that is envy free ex ante assigns items to agents with the same probabilities as \mylike\ does. However, the \mylike\ mechanism is not
very fair ex post as it can possibly allocate
all items to one agent. 
The \myblike\ mechanism 
is fairer. It allocates
an arriving item uniformly at random between the
agents that ``like'' it who have the fewest
items currently. 
\myblike\ bounds the envy one agent 
has for another's allocation ex post. However, this
comes at the price of no longer being 
strategy proof in general 
\cite{aleksandrov2015ijcai}.  
When restricted to 2 agents
and 0/1 utilities, \myblike\ is
strategy proof. These mechanisms are simple and satisfy many desirable axioms. For these reasons, we now turn attention to their computational properties. 

In practice, it may be difficult to
query the agents each time an item arrives. 
The chair will often collect
the preferences of the agents in advance,
and allocate items to agents as they arrive. There are several
settings where it is reasonable
to suppose that the chair does that. For instance, 
in the food bank problem, a good proxy
for the utility of an item to a charity
that likes it might simply be its
retail price. This is public information. 
As a second example, in 
deceased organ matching, the utility of allocating an organ to a patient might be computed from a simple formula that takes account
of the age of the organ, the age of
the patient and a number of other 
medical factors. This is again public information. The chair might then be interested in what outcomes 
are possible, necessary or exact based on these
declared preferences.
For example, the chair might
be concerned that agents 
receive enough utility or 
particular essential items.
Alternatively, the chair might
want to be sure that a favored
agent gets a particular item. Also, they might even want to give similar utility to each agent or bias the future allocation in case some agents receive only a few items and are promised to receive more in expectation.

There are two sources of uncertainty in 
deciding these outcomes. First,
both mechanisms are randomized.
Therefore each mechanism returns 
a probability distribution over actual outcomes. 
Second, as the problem is online, the arrival order of items 
is typically unknown. We consider here 
the problem of the chair computing what outcomes
are possible, necessary or exact depending on
both sources of uncertainty. In particular, we focus on computing whether 
an agent can possibly or necessarily 
receive a given expected utility. These results
easily translate into whether an agent
can possibly or necessarily 
receive a given item. We simply
give most of the agent's utility to that item. 
Also, as all our results hold in the case
of binary utilities, they can also be viewed
as computing whether an
agent can possibly or necessarily
receive a given expected number 
of items. Whilst some of our results consider general utilities, such utilities are mainly used to compare outcomes and do not need to be elicited explicitly. General utilities are not used when bidding or allocating items. Such ``like" and ``not like'' reporting has advantages. It is simple, does not require costly eliciting of utilities of agents for items and it also leads to mechanisms with nice
axioms. 

\emph{Our contributions:} We consider three settings:
the chair knows the arrival ordering of items,
the arrival ordering is drawn from some probability
distribution, and the allocation of past items is known. In all settings, we study the problem of the chair computing possible, necessary and exact outcomes of \mylike\ and \myblike. For both mechanisms, these problems are intractable
even with 2 agents and when the ordering of items is not fixed. In contrast, with any number of agents, computing each of these outcomes is tractable for \mylike\ and intractable for \myblike\ when the ordering of items is fixed. Interestingly, computing outcomes with \myblike\ becomes tractable in this setting only when restricted to 2 agents. Further, computing outcomes is tractable for both mechanisms at a certain moment of time when a new item arrives supposing the allocation of past items is known. In addition, we study a closely related problem of whether an agent can manipulate these mechanisms by strategically misreporting their preferences. Our computational results have a number of interesting consequences. For example, recall that the \myblike\ mechanism is fairer but not strategy proof. However, we show that computing a manipulation of this mechanism is intractable in general. 

\section{Preliminaries}\label{sec:prem} 

We next provide basic definitions of online instances, the \mylike\ and \myblike\ mechanisms and their outcomes.

{\bf Allocation instance:} An \emph{instance} $\mathcal{I}=(A,O,U,\Delta)$ of an online fair division problem has (1) a set $A$ of \emph{agents} $a_1,\ldots,a_n$, (2) a set $O$ of indivisible \emph{items} $o_1,\ldots,o_m$, (3) a matrix $U=(u_{ik})_{m\times n}$ where $u_{ik}$ is the {\em cardinal utility} of agent $a_i$ for item $o_k$ and (4) a matrix $\Delta=(\delta_{kj})_{m\times m}$ where $\delta_{kj}$ is a \emph{probability} that item $o_k$ arrives in moment $j$.

We consider \emph{binary} utilities and \emph{general} rational non-negative utilities. We say that agent $a_i$ \emph{likes} item $o_k$ if $u_{ik}>0$. Further, we assume that one item arrives in each moment $j$, i.e. $\sum_{k=1:m} \delta_{kj}=1$. 

{\bf Online setting:} Suppose items $o_{1}$ to $o_{j}$ have arrived at moments $1$ to $j$, respectively. Given $o=(o_{1},\ldots,o_{j})$, let $\Delta(o)$ be its probability, $\pi(j,o)$ the current allocation of these items to agents, $p(\pi(j,o))$ its probability and $u_i(\pi(j,o))$ the additive utility of agent $a_i$ for the items they receive in $\pi(j,o)$. Now, suppose that item $o_{k}$ arrives at moment $(j+1)$ with probability $\delta_{k(j+1)}$ when each agent $a_{i}$ places a rational non-negative \emph{bid} $v_{ik}$ for this item and a \emph{mechanism} then decides its allocation to a \emph{feasible} agent in an \emph{online} manner, i.g. given $\pi(j,o)$ and \emph{no} information about future items. 

{\bf Mechanisms:} We consider the randomized \mylike\ and \myblike\ mechanisms from \cite{aleksandrov2015ijcai}. With the \mylike\ mechanism, agent $a_i$ is feasible for item $o_k$ if $v_{ik}>0$. With the \myblike\ mechanism, agent $a_i$ is feasible for item $o_k$ if $v_{ik}>0$ and have so far received fewest items given $\pi(j,o)$ among those agents that bid positively for item $o_k$. Let the number of feasible agents be $f_k$. The probability that a feasible agent $a_i$ is allocated item $o_k$ is equal to $1/f_k$. 

{\bf Possible, necessary and exact outcomes:} We consider \emph{expected} probabilities depending on what information is available to the chair. If the allocation $\pi(j,o)$ is the only available information, we use $p_i(j+1,\pi(j,o))$ for the \emph{probability} of agent $a_i$ for the item that arrives at moment $(j+1)$. If the order $o$ is the only available information, we use $p_i(j+1,o)$ for the \emph{probability} of agent $a_i$ for the item that arrives at moment $(j+1)$. It is equal to $\sum_{\pi(j,o)}p(\pi(j,o))\cdot p_i(j+1,\pi(j,o))$. If there is no information about $o$ or $\pi(j,o)$, we use $p_i(j+1)$ for the \emph{probability} of agent $a_i$ for the item that arrives at moment $(j+1)$. It is equal to $\sum_{o}\Delta(o)\cdot p_i(j+1,o)$. We next define expected utilities of agents for items in each of these settings. Given $\pi(j,o)$, we use $u_{ij}(\pi(j,o))$ for the \emph{utility} of agent $a_i$. It is equal to $u_i(\pi(j,o))+p_i(j+1,\pi(j,o))$. Given $o$, we use $u_{ij}(o)$ for the \emph{utility} of agent $a_i$. It is equal to $\sum_{\pi(j,o)}p(\pi(j,o))\cdot u_i(\pi(j,o))$. Given $\Delta$, we use $u_{ij}(\Delta)$ for the \emph{utility} of agent $a_i$. It is equal to $\sum_{o}\Delta(o)\cdot u_{ij}(o)$.

The \emph{probability (or utility)} of agent $a_i$ at moment $j$ is \emph{possible} if their probability (or utility) is positive. The outcome of agent $a_i$ at moment $j$ is \emph{necessary} at least some rational number $k$ if their probability (or utility) is at least $k$. We also say that the outcome of agent $a_i$ at moment $j$ is \emph{exact} if we want to compute the exact value of their probability (or utility). 

We study the complexity of computing possible, necessary and exact outcomes. For a mechanism that allocates all items to agents that like them, note that possible and necessary outcomes are directly related. For this reason, we only study necessary and exact outcomes. Our results for possible outcomes are inherited. We next show this relation. 

Suppose we ask if ${p}_{i}(j+1)>0$ holds. This is true iff there is an ordering $o$ and allocation $\pi(j,o)$ of the first $j$ items such that $p_i(j+1,\pi(j,o))>0$. We therefore conclude that ${p}_{i}(j+1)>0$ iff ${p}_{i}(j+1)\geq \epsilon$ where $0<\epsilon\leq\min_{o,\pi(j,o)}\Delta(o)\cdot p(\pi(j,o))\cdot p_i(j+1,\pi(j,o))$. Note that this minimum value is positive and, consequently, such $\epsilon$ always exists. Such a relation is not true for utilities. For the utility of agent $a_i$, we have that $u_{ij}(\Delta)>0$ holds iff agent $a_i$ bids positively for at least one item and at least one item arrives. This problem is easy to decide. However, deciding if $u_{ij}(\Delta)\geq k$ holds might not be so easy. 

Recall that we consider three settings: when the past allocation of items to agents is known, when the ordering of items is unknown and when the ordering of items is known. We next observe that all outcomes are tractable in the setting when the past allocation is known, \emph{fixed} and \emph{no} information about future items is available. 

{\bf Items arriving online:} Let us suppose that the first $j$ items have arrived and their allocation be $\pi(j,o)$. Suppose now that item $o_{k}$ arrives at moment $(j+1)$. For both \mylike\ and \myblike, the \emph{exact} value of $p_i(j+1,\pi(j,o))$ is equal to $\sum_{k=1:m}\delta_{k(j+1)}\cdot(1/f_k)$ and the \emph{exact} value of $u_i(\pi(j,o))$ is equal to the sum of the cardinal utilities of agent $a_i$ for the items they are allocated in $\pi(j,o)$. Both of these exact outcomes, the value of $u_{ij}(\pi(j,o))$ and therefore any \emph{possible} and \emph{necessary} outcomes in this setting can be computed in $\mathcal{O}(m\cdot n)$ time and space.

We use popular reductions and computational problems from computational complexity, graph theory and set theory in order to show our hardness results. 

{\bf Computational complexity:} We use complexity classes of decision and counting problems such as $\P$, $\NP$, $\coNP$ and $\#\P$, and mappings such as \emph{Karp}, \emph{Turing}, \emph{parsimonious} and \emph{arithmetic} reductions \cite{burgisser2000,turing1936a,valiant1979p}. 

{\bf Graph theory:} Let $G$ be an undirected bipartite graph. A \emph{matching} $\mu$ in $G$ is a set of vertex-disjoint edges. We say that $\mu$ \emph{matches} a vertex if there is an edge in it that is incident with the vertex. Matching $\mu$ is \emph{maximal} if it is no longer a matching once some other edge is added to it.  Matching $\mu$ is \emph{perfect} if it matches all vertices in $G$. Given a graph $G$ and a number $k$, the \emph{minimum size maximal matching problem} is to decide if there is a matching $\mu$ in $G$ with $|\mu|\leq k$. It is $\NP$-hard on various bipartite graphs \cite{demange2008,saban2015}. Given a graph $G$, the \emph{counting perfect matchings problem} is to output the number of perfect matchings in $G$. It is $\#\P$-hard on various bipartite graphs \cite{okamoto2009,valiant1979er}.

{\bf Set theory:} Let $S$ be a set of integers and $b,c$ be integers. A $(b,c)$-subset of $S$ is a subset of $S$ whose elements sum up to $b$ and its cardinality is $c$. The \emph{$(b,c)$-subset sum problem} is to decide if there is a $(b,c)$-subset of $S$. Note that there is a $(b,c)$-subset of $S$ for at least one $c\in[1,|S|]$ iff there is a subset of $S$ whose elements sum up to $b$. The latter problem is the $\NP$-hard \emph{$b$-subset sum problem} \cite{garey1979}. 

This paper is structured as follows. In Section~\ref{sec:second}, the items are drawn from some known probabilistic distribution $\Delta$. For example, such distribution in the food bank problem could be estimated based on historical data. In Section~\ref{sec:third}, we suppose the ordering $o$ in which the items will arrive is fixed, i.e. for each moment $j$, we have that $\delta_{kj}=1$ holds for exactly one item $o_k$. Again, in the food bank problem, some charities donate certain items on a regular basis and only at specific moments. In Section~\ref{sec:manipulations}, we consider problems of computing manipulations of these mechanisms.

\section{Items Arriving from a Distribution}\label{sec:second}

We suppose the agents act sincerely and begin with the case when the chair knows the utilities {\em but} the items come from a distribution $\Delta$ whose size is polynomial in $n$ and $m$.

\begin{center}
\begin{tabular}{ccc}

\fbox{
\begin{minipage}{0.45\textwidth}
\mystochastic\myexact\myutility \\
Input: $\mathcal{I}=(A,O,U,\Delta)$, $a_i$. \\
Output: ${u}_{im}(\Delta)$.
\end{minipage}
}

& \hspace{0.1cm} &

\fbox{
\begin{minipage}{0.45\textwidth}
\mystochastic\mynecessary\myutility ~ \\
Input: $\mathcal{I}=(A,O,U,\Delta)$, $a_i$, $k\in\mathbb{Q}$.
\\
Question: ${u}_{im}(\Delta)\geq k$?
\end{minipage}
}

\end{tabular}
\end{center}

The stochastic exact outcomes of \mylike\ and \myblike\ are $\#\P$-hard with just two agents. Our reduction is motivated by the food bank problem. Let $m$ items be donated by $m$ suppliers and not each of the suppliers can donate each of the items. This relation could be viewed as an undirected bipartite graph. The items are in one partition. The suppliers are in another partition. Let us enumerate them from $1$ to $m$. There is an edge between an item and a supplier if the supplier donates the item. Each perfect matching in the graph then can be viewed as an ordering w.r.t. the enumeration of the suppliers in which each of the $m$ different suppliers donates exactly one of the $m$ different items. At the beginning of the day, the chair does not know the actual order in which the suppliers will donate items but they can estimate it by computing an estimate $\delta_{kj}$ for each item $o_k$ and moment $j$. Based on past data whose size is polynomial in $m$, one such estimate could be the number of days of past data in which each of the $m$ items is donated from a different supplier amongst the $m$ suppliers divided by the total number of days of past data. We give a reduction from the \emph{counting perfect matchings problem} to \mystochastic\myexact\myutility.

\begin{myreduction}\label{red:one}\em
Let $G$ be a (3-regular) bipartite graph with $M$ vertices in each partition. The allocation instance $\mathcal{I}_G$ has:

\begin{itemize}[noitemsep,topsep=0pt]
\item {\bf Agents:} agents $a_1$ and $a_2$ (i.e. $2$ agents),
\item {\bf Items:} items $o_1$ to $o_M$ (i.e. $M$ items), 
\item {\bf Utilities:}  $u_{ij}=1$ for each $a_i$ and $o_j$, and
\item {\bf Distribution:} $\delta_{kj}=1/M$ for each $o_k$ and $j$.
\end{itemize}

\end{myreduction}

\begin{mytheorem}\label{thm:one} 
With $n=2$ agents, 0/1 utilities and the \mylike\ or \myblike\ mechanism, problem {\sc StochasticExactUtility} is $\#\P$-hard under arithmetic reductions. 
\end{mytheorem}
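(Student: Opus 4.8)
The plan is to prove $\#\P$-hardness of $\mystochastic\myexact\myutility$ by an arithmetic reduction from the counting perfect matchings problem on ($3$-regular) bipartite graphs, using the single instance $\mathcal{I}_G$ of Reduction~\ref{red:one} for both the \mylike\ and the \myblike\ mechanism. Given such a graph $G$ with $M$ vertices in each partition, one builds $\mathcal{I}_G$ in time polynomial in $|G|$, queries the oracle for the exact utility $u_{im}(\mathcal{I}_G)$ of a designated agent $a_i$ (on $\mathcal{I}_G$, and if needed on a bounded number of slight variants), and then recovers the number of perfect matchings of $G$ from the answer(s) by a fixed sequence of additions, subtractions, multiplications and divisions whose other operands are counts over $G$ that are themselves polynomial-time computable. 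Since this post-processing is an arithmetic (hence Turing) circuit over the oracle, $\#\P$-hardness carries over to $\mystochastic\myexact\myutility$.

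The first step is to fix the dictionary between the randomness of $\mathcal{I}_G$ and the combinatorics of $G$: identify the $M$ moments with the vertices of one partition (the ``suppliers'') and the $M$ items with the other partition, so that an arrival order $o$ with $\Delta(o)>0$ is precisely an assignment sending each supplier to an adjacent item, and the arrival orders that are genuine permutations of the items are precisely the perfect matchings of $G$. One then expands $u_{im}(\mathcal{I}_G)=\sum_{o}\Delta(o)\,u_{im}(o)$ and separates the contribution of the permutation orders (which will carry the term proportional to the matching count) from the contribution of the remaining, repeat-containing, orders.

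The technical core is to bring both contributions into closed form, separately for the two mechanisms. For \mylike\ an arriving item is handed to a uniformly random liker independently of the history, so $u_{im}(o)$ depends only on the multiset of items that $o$ makes arrive; by linearity the whole sum collapses into a combination of permanents of submatrices of the incidence structure of $G$, weighted by a utility profile chosen so that the cofactor expansion does not merely telescope, and from this the permanent itself (that is, the number of perfect matchings of $G$) can be isolated. For \myblike\ the value $u_{im}(o)$ depends in addition on the internal order of $o$ through the ``fewest items so far'' tie-break, so one argues that, using the regularity of $G$, the repeat-containing orders contribute either a graph-independent rational or a quantity recoverable through the extra oracle calls, while the permutation orders again contribute a quantity proportional to the number of perfect matchings; since $\Delta$ is uniform on its support, each $\Delta(o)$ is a single fixed rational, which keeps this accounting linear.

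The step I expect to be the real obstacle is this closed-form evaluation for \myblike: one has to control, across the exponentially many arrival orders, how the balancing rule distributes items between the two agents, and show that exactly the permutation orders leave a residue proportional to the matching count while everything else cancels into something polynomial-time computable. Choosing the instance parameters (which agent likes which items, the degree structure) so that this separation is clean, the coefficient of the matching count is nonzero, and the post-processing constants all stay polynomial-time, is where the proof needs the most care; the \mylike\ case is then the easier special case in which the balancing phenomenon disappears and only the permanent-isolation step remains.
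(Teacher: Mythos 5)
You start from the same source problem and the same instance $\mathcal{I}_G$ as the paper, but the way you then try to extract the matching count is not the paper's argument, and it contains a gap that I do not think can be repaired along the lines you describe. Your plan keeps $\Delta$ as a product of independent per-moment choices (each supplier independently sends one adjacent item) and then tries to separate, analytically, the contribution of the permutation orders from that of the repeat-containing orders. For the \mylike\ mechanism this cannot work: since \mylike\ allocates each arriving item to a uniformly random liker independently of history, the expected utility under \emph{any} product distribution linearizes into per-item arrival marginals, each of which is polynomial-time computable from the matrix $(\delta_{kj})$; no choice of utility profile turns this sum into ``a combination of permanents''---the permanent simply never appears, and with the all-ones utilities of Reduction~\ref{red:one} every graph-consistent order contributes the same amount, so the answer is graph-independent. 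For \myblike\ you yourself flag that you cannot carry out the closed-form evaluation of the repeat-containing orders, and you offer no concrete mechanism (extra oracle calls, cancellation) for doing so. So in your framework the theorem would in fact be false for \mylike\ and unproven for \myblike.

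The paper's proof avoids all of this by a single modelling move that your proposal is missing: the distribution over \emph{orderings} is taken to be supported only on the orderings corresponding to perfect matchings of $G$ (each with probability $1/M^M$), with all remaining probability mass placed on a null ordering $o_{\epsilon}$ that reveals no items and hence contributes utility $0$. Because both agents value every item at $1$, symmetry gives $u_{iM}(o)=M/2$ for every supported non-null ordering under \emph{either} mechanism, so ${u}_{iM}(\Delta)=(1/M^M)\cdot(M/2)\cdot\#\mathrm{PM}(G)$ and the matching count is read off by one multiplication. There is no separation of repeat-containing orders to perform, and no difference between the two mechanisms to analyze. If you want to salvage your write-up, the fix is not a cleverer closed-form evaluation but a redefinition of the support of $\Delta$ so that only permutation orders carry utility; that is where the entire hardness of \textsc{StochasticExactUtility} resides in this reduction.
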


\begin{proof}
WLOG, the set of orderings of items is equal to the set of perfect matchings in $G$ united with the set of $o_{\epsilon}$ that reveals no items. Each ordering $o_M$ that reveals $M$ items corresponds to a perfect matching in $G$ w.r.t. the enumeration of the suppliers in $G$. We suppose the items arrive independently of each other and across the different time moments. Consequently, ordering $o_M$ occurs with probability $1/M^M$ and the expected utility $u_{iM}(o_M)$ is $M/2$ with both mechanisms as both agents have the same utilities for items. The ordering $o_{\epsilon}$ reveals 0 items. It occurs with probability $1$ minus $(1/M^M)$ multiplied by the number of perfect matchings in $G$ and $u_{i0}(o_{\epsilon})$ is $0$ with both mechanisms as no items are revealed. We quickly obtain that $u_{iM}(\Delta)$ is equal to $(1/M^M)\cdot(M/2)$ multiplied by the number of perfect matchings in $G$. The result follows.\hfill \myqed
\end{proof}

We further showed that stochastic necessary outcomes of these mechanisms are $\NP$-hard with just two agents. We omit the complete proof for reasons of space but we give the main reduction which is from the {\em $(b,c)$-subset sum problem}. Given set of integers $S=\lbrace n_1,\ldots,n_M\rbrace$ and integers $b$ and $c$, we construct instance $\mathcal{I}_{S,b,c}$: (1) agents $a_1$ and $a_2$, (2) item $o_k$ for each $n_k\in S$, (3) agent $a_i$ values item $o_k$ with $n_k$, and (4) $\delta_{kj}=1/M$ for each item $o_k$ and moment $j$. The instance of {\sc StochasticNecessaryUtility} has $\mathcal{I}_{S,b,c}$, agent $a_i$ and constant $k=(1/M^c)\cdot(b/2)$. Let us order each subset of $S$ w.r.t. the enumeration $(1,\ldots,M)$. The set of orderings is now equal to the set of ordered $(b,c)$-subsets of $S$ united with the set of $o_{\epsilon}$ that reveals no items. Similarly to the proof of Theorem~\ref{thm:one}, it should be easy now for the reader to show that there is a $(b,c)$-subset of $S$ iff $u_{iM}(\Delta)\geq k$.

\section{Items Arriving from a Fixed Ordering}\label{sec:third}

We again suppose the agents act sincerely and next consider the case that the chair knows the utilities {\em and} the arrival ordering of future items. This corresponds to the case when exactly one item arrives with probability of one at each moment in time.

\begin{center}
\begin{tabular}{ccc}

\fbox{
\begin{minipage}{0.45\textwidth}
 \myexact\myutility \\
Input: $\mathcal{I}=(A,O,U,o)$, $a_i$. \\
Output: ${u}_{im}(o)$.
\end{minipage}
}

& \hspace{0.1cm} &

\fbox{
\begin{minipage}{0.45\textwidth}
\mynecessary\myutility ~ \\
Input: $\mathcal{I}=(A,O,U,o)$, $a_i$, $k\in\mathbb{Q}$.
\\
Question: ${u}_{im}(o)\geq k$?
\end{minipage}
}

\end{tabular}
\end{center}

\subsection{The Case of $n>2$ Agents}

Let there be $n>2$ agents. Interestingly, the outcomes of the \mylike\ mechanism become tractable whereas the ones of the \myblike\ mechanism remain intractable even when the ordering is fixed.

\subsubsection{Exact Outcomes}

Let us start with the \mylike\ mechanism. This mechanism does not keep track of the allocation of past items. As a result, any agent is feasible for each next item supposing they like this item. Indeed, all exact outcomes are tractable with this mechanism for this reason.
 
\begin{myobservation}\label{obs:one} 
With general utilities and the \mylike\ mechanism, problem {\sc ExactUtility} is in $\P$. 
\end{myobservation}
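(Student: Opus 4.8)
The plan is to show directly that $u_{im}(o)$ can be computed in polynomial time for the \mylike\ mechanism. The key observation is that because \mylike\ ignores the allocation history entirely, the probability that agent $a_i$ receives the item arriving at moment $j$ depends only on that item: if $a_i$ likes $o_{(j)}$ (the $j$-th item in the fixed ordering $o$), this probability is exactly $1/f_{(j)}$, where $f_{(j)}$ is the number of agents who like $o_{(j)}$; otherwise it is $0$. In particular this value does not depend on $o$ beyond which item sits in slot $j$, and it certainly does not depend on the random outcomes of earlier moments.

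First I would write $u_{im}(o) = \sum_{\pi(m,o)} p(\pi(m,o)) \cdot u_i(\pi(m,o))$ and use linearity of expectation to rewrite this as $u_{im}(o) = \sum_{j=1}^{m} u_{i\sigma(j)} \cdot \Pr[a_i \text{ gets item in slot } j]$, where $\sigma(j)$ is the index of the item arriving at moment $j$. Then I would substitute the closed form for each probability: the term for slot $j$ is $u_{i\sigma(j)}/f_{\sigma(j)}$ if $u_{i\sigma(j)}>0$ and $0$ otherwise (note these coincide, since $u_{i\sigma(j)}=0$ kills the term anyway). Computing each $f_{\sigma(j)}$ requires one pass over the utility column for that item, so the whole sum is computed in $\mathcal{O}(m\cdot n)$ arithmetic operations over rationals of polynomially bounded size; hence the exact outcome is computable in polynomial time, placing {\sc ExactUtility} in $\P$.

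The only subtlety — and the one place that warrants care rather than being truly an obstacle — is justifying the interchange of summation (linearity of expectation over the exponentially many orderings of random outcomes $\pi(m,o)$) and, more importantly, the claim that the marginal probability $p_i(j,\pi(j-1,o))$ is independent of $\pi(j-1,o)$ for \mylike. This independence is exactly the content of the earlier remark that \mylike\ ``does not keep track of the allocation of past items'': feasibility for $o_{\sigma(j)}$ is determined solely by $v_{i\sigma(j)}>0$, so $f_{\sigma(j)}$ is a constant independent of history, and the per-slot allocation probability is therefore a fixed constant. Once that is noted, the sum telescopes into the slot-by-slot formula with no further work. I expect no genuine difficulty here; the statement is essentially immediate once the memorylessness of \mylike\ is made explicit.
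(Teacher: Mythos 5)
Your proposal is correct and follows essentially the same route as the paper: the paper's proof likewise notes that with \mylike\ the per-item probability is $1/n_j$ (the number of agents liking item $o_j$) independently of history, and writes $u_{im}(o)=\sum_{j=1}^m (1/n_j)\cdot u_{ij}$. Your additional care about linearity of expectation and the memorylessness of the mechanism is just a more explicit spelling-out of the same argument.
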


\begin{proof}
The probability $p_i(j,o)$ of agent $a_i$ for item $o_j$ is $1/n_j$ where $n_j$ is the number of agents that like the item. Their utility $u_{im}(o)$ can be given as $\sum_{j=1}^m (1/n_j)\cdot u_{ij}$. \hfill \myqed
\end{proof}

We continue with exact allocations for the \myblike\ mechanism and give a parsimonious reduction from \emph{counting perfect matchings problem} to \myexact\myutility. The counting problem remains in $\#\P$-hard even on $3$-regular undirected bipartite graphs in \cite{dagum1992}. Our reduction is very insightful because it provides a very tight bound on the complexity of \myexact\myutility\ (i.e. 0/1 utilities, each agent likes at most 4 items, each item except one is liked by at most 3 agents, each pair of agents like at most 3 items in common, the ordering is fixed, etc.).

\begin{myreduction}\label{red:two}\em
Let $G$ be a 3-regular bipartite graph, $u_1,\ldots,u_N$ be the vertices from one of its partitions and $v_1,\ldots,v_N$ the vertices from the other one of its partitions. For each vertex $u_i$, let $v_{i1},v_{i2},v_{i3}$ denote the vertices connected to it and $e_{3\cdot(i-1)+1}=(u_i,v_{i1})$, $e_{3\cdot(i-1)+2}=(u_i,v_{i2})$, $e_{3\cdot(i-1)+3}=(u_i,v_{i3})$ the edges incident with it. Each edge $e_k$ can be represented as $(u_i,v_j)$ for some $u_i\in\lbrace u_1,\ldots,u_N\rbrace$ and $v_j\in\lbrace v_{i1},v_{i2},v_{i3}\rbrace$. We use the graph and next construct the online allocation instance $\mathcal{E}_{G}$ as follows: 

\begin{itemize}[noitemsep]
\item {\bf Agents:} 1 agent $a_k$ per edge $e_k$ and 3 special agents $a_{3\cdot N+1}$, $a_{3\cdot N+2}$ and $a_{3\cdot N+3}$ (i.e. $3\cdot N+1$ agents),
\item {\bf Items:} 1 item per vertex $v_j$, 2 items $u_{i1}$, $u_{i2}$ per vertex $u_i$ and 3 special items $w$ and $x$ (i.e. $3\cdot N+2$ items), 
\item {\bf Non-zero utilities:} for $i\in[1,N],j\in\lbrace 1,2,3\rbrace$, {\bf agent} $a_{3\cdot(i-1)+j}$ has utility 1 for items $v_{ij},u_{i1},u_{i2},x$; {\bf agent} $a_{3\cdot N+1}$ has utility 1 for items $w,x$, and
\item {\bf Ordering:} $o=(v_1\ldots v_Nu_{11}u_{12}\ldots u_{N1}u_{N2}wx)$.
\end{itemize}

\end{myreduction}

We highlight the main idea behind the proof of the next Lemma~\ref{lem:one}. Basically, we showed that computing the number of allocations of the first $3\cdot N+1$ items in $o$ in which each agent receives exactly one item is in $\#\P$-complete.

\begin{mylemma}\label{lem:one}  With the \myblike\ mechanism, the number of allocations in $\mathcal{E}_{G}$ in which agent $a_{3\cdot N+1}$ is feasible for item $x$ is equal to $2^N$ times the number of perfect matchings in $G$. Computing it is in $\#\P$-hard under arithmetic reductions.
\end{mylemma}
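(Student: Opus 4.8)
The plan is to analyze how the \myblike\ mechanism processes the items in the fixed ordering $o=(v_1\ldots v_N u_{11}u_{12}\ldots u_{N1}u_{N2}wx)$ and to show that the event ``agent $a_{3\cdot N+1}$ is feasible for item $x$'' happens exactly in those allocation branches where every non-special edge-agent ends up having received precisely one item among $v_1,\ldots,v_N,u_{11},u_{12},\ldots,u_{N1},u_{N2}$ and moreover the $v_j$-items got distributed so as to trace a perfect matching of $G$.

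First I would track the ``fewest items'' counter. When the vertex-items $v_1,\ldots,v_N$ arrive, for each $v_j$ the feasible agents are the (up to three) edge-agents $a_k$ with $e_k$ incident to $v_j$ that currently hold the fewest items; initially everyone holds zero, so each $v_j$ is given uniformly at random to one of these edge-agents, raising that agent's count to $1$. Then the pair-items $u_{i1},u_{i2}$ arrive: the three edge-agents $a_{3(i-1)+1},a_{3(i-1)+2},a_{3(i-1)+3}$ associated with vertex $u_i$ all like both $u_{i1}$ and $u_{i2}$, so among these three, those who currently hold the fewest items are feasible. The key combinatorial observation I would establish: after the three edge-agents of $u_i$ absorb the two items $u_{i1},u_{i2}$, agent $a_{3N+1}$ can still be feasible for $x$ only if at the moment $x$ arrives, every edge-agent holds at least one item — equivalently, in the branch under consideration the three edge-agents of each $u_i$ collectively received exactly one $v$-item (namely $v_j$ with $(u_i,v_j)$ being the matching edge at $u_i$) plus the two $u$-items, so that each of the three holds exactly one item; any edge-agent holding zero items would be feasible (tied at the minimum $0$) for $x$ along with $a_{3N+1}$ is fine, but the count of favorable allocations is what matters, and I would show the clean identification is with ``every edge-agent holds exactly one'' so that when $w$ then $x$ arrive, $a_{3N+1}$ (who likes only $w,x$, holds $0$, then $1$) is feasible for $x$. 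I would verify that $w$ arriving just before $x$ is the device that sets $a_{3N+1}$'s count to exactly $1$, matching the edge-agents, so $a_{3N+1}$ is feasible for $x$ iff no edge-agent was left with $0$ items, which forces each triple of edge-agents at $u_i$ to have caught exactly one distinct $v_j$, i.e. the received $v$-items form a perfect matching.

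Next I would count. Each perfect matching $\mu$ of $G$ selects, for every $u_i$, the unique $v_j=\mu(u_i)$; the $v_j$-item must have been routed to the edge-agent $a_k$ with $e_k=(u_i,v_j)$, and since the $v_j$-items are vertex-disjoint in $\mu$ this is consistent across all $i$. Within each triple at $u_i$, once one edge-agent holds a $v$-item and the other two hold none, the two $u$-items $u_{i1},u_{i2}$ are forced onto the two empty edge-agents (each is feasible only while it has the fewest, and the ordering $u_{i1}$ then $u_{i2}$ with the tie-broken-uniformly rule gives exactly $2$ equally-likely ways to assign $\{u_{i1},u_{i2}\}$ to the two empty agents), contributing a factor of $2$ per vertex $u_i$, hence $2^N$ per matching. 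Summing over the $\mathrm{pm}(G)$ perfect matchings gives that the number of such favorable allocations is $2^N\cdot\mathrm{pm}(G)$. For the hardness claim I would invoke that counting perfect matchings is $\#\P$-hard on $3$-regular bipartite graphs (cited as \cite{dagum1992}); since $2^N$ is computable in polynomial time and the reduction $\mathcal{E}_G$ is polynomial, an oracle for the count yields $\mathrm{pm}(G)=(\text{count})/2^N$, which is an arithmetic (indeed, division-by-a-poly-computable-constant) reduction, establishing $\#\P$-hardness under arithmetic reductions.

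The main obstacle I expect is the careful case analysis of the tie-breaking dynamics among the three edge-agents at each $u_i$: one must rule out branches where, say, a $v$-item lands on an edge-agent that already received another $v$-item (possible when two incident edges at distinct $v$-vertices share the same $u_i$ — but here each edge-agent corresponds to a single fixed edge $e_k=(u_i,v_j)$ and likes only the one item $v_{ij}$, so in fact an edge-agent can catch at most one $v$-item, which simplifies this considerably), and to confirm that when the two $u_i$-items arrive the ``fewest items'' set is exactly the set of still-empty edge-agents of $u_i$ and nobody outside that triple. I would also need to double-check the boundary interaction at item $w$: $w$ is liked only by $a_{3N+1}$ (and I must confirm from the utility list that no edge-agent likes $w$), so $w$ goes deterministically to $a_{3N+1}$ and is the unique mechanism-step that lifts $a_{3N+1}$'s count from $0$ to $1$ before $x$; getting this synchronization exactly right is the delicate point. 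Everything else is bookkeeping.
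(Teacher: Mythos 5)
Your proposal is correct and follows essentially the same route as the paper's proof: you characterize feasibility of $a_{3\cdot N+1}$ for $x$ as ``every agent holds exactly one item after the first $3\cdot N+1$ items,'' identify the favorable allocations of $v_1,\ldots,v_N$ with perfect matchings of $G$, show each such allocation extends in exactly $2^N$ ways via the two $u_i$-items per triple, and rule out non-matching branches by the counting argument that some edge-agent is then left with zero items. The only blemish is the garbled aside about a zero-item edge-agent being feasible ``along with $a_{3\cdot N+1}$'' (in that case $a_{3\cdot N+1}$, holding one item from $w$, is \emph{not} feasible), but your final characterization is the correct one, so this does not affect the argument.
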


\begin{proof} By construction, each item $v_j$ is liked by three different agents and, hence, each allocation of $v_1,\ldots,v_N$ gives these items to $N$ different agents among $a_1,\ldots,a_{3\cdot N}$. Consider then an allocation of $v_1,\ldots,v_N$ such that, for each vertex $u_i$, either agent $a_{3\cdot(i-1)+1}$ gets item $v_{i1}$ or agent $a_{3\cdot(i-1)+2}$ gets item $v_{i2}$ or agent $a_{3\cdot(i-1)+3}$ gets item $v_{i3}$. We say that such an allocation of $v_1,\ldots,v_N$ has \emph{perfect} matches for vertices $u_1,\ldots,u_N$ because exactly one agent per triplet $a_{3\cdot(i-1)+1},a_{3\cdot(i-1)+2},a_{3\cdot(i-1)+3}$ gets an item among $v_1,\ldots,v_N$. In fact, there is a perfect matching in $G$ over $v_1,\ldots,v_N$ and $u_1,\ldots,u_N$ iff there is an allocation in $\mathcal{E}_{G}$ of $v_1,\ldots,v_N$ that has perfect matches for $u_1,\ldots,u_N$. Furthermore, this is a 1-to-1 parsimonious correspondence. Each allocation $\pi$ in $\mathcal{E}_{G}$ of the first $3\cdot N+1$ items in $o$ in which each agent among $a_1,\ldots,a_{3\cdot N},a_{3\cdot N+1}$ receives exactly one item occurs with positive probability. We call $\pi$ \emph{perfect allocation} over the first $3\cdot N+1$ items in $o$. We show that there is an allocation in $\mathcal{E}_{G}$ of $v_1,\ldots,v_N$ that has perfect matches for $u_1,\ldots,u_N$ iff there are $2^N$ perfect allocations such as $\pi$ in $\mathcal{E}_{G}$. Moreover, this is a 1-to-$2^N$ arithmetic correspondence. In other words, we show that the number of perfect allocations such as $\pi$ in $\mathcal{E}_{G}$ is equal to $2^N$ times the number of perfect matchings in $G$.

First, let us consider one discrete allocation $\pi_1$ in $\mathcal{E}_{G}$ of $v_1,\ldots,v_N$ that has perfect matches for $u_1,\ldots,u_N$. The allocation $\pi_1$ occurs with positive probability because $v_1,\ldots,v_N$ are liked by disjoint sets of three agents. WLOG, suppose that $\pi_1$ is such that, for each $u_i$, agent $a_{3\cdot(i-1)+1}$ receives their corresponding item $v_{i1}$. The allocation $\pi_1$ can be extended by the mechanism to two discrete allocations w.r.t. each $u_i$: (1) agent $a_{3\cdot(i-1)+2}$ gets item $u_{i1}$ and agent $a_{3\cdot(i-1)+3}$ gets item $u_{i2}$ or (2) agent $a_{3\cdot(i-1)+2}$ gets item $u_{i2}$ and agent $a_{3\cdot(i-1)+3}$ gets item $u_{i1}$. By the preference structure, $\pi_1$ can then be extended by the mechanism to $2^N$ perfect allocations in $\mathcal{E}_{G}$. Note that each of these perfect allocations necessarily gives item $w$ to agent $a_{3\cdot N+1}$ because only they like it. Second, consider one perfect allocation in $\mathcal{E}_{G}$. It must be the case that it extends some discrete allocation of $v_1,\ldots,v_N$ that has perfect matches for $u_1,\ldots,u_N$. To show this, consider a discrete allocation $\pi_2$ of $v_1,\ldots,v_N$ that has not perfect matches for $u_1,\ldots,u_N$. Hence, $\pi_2$ is such that at least two of the agents $a_{3\cdot(i-1)+1},a_{3\cdot(i-1)+2},a_{3\cdot(i-1)+3}$ for some vertex $u_i$ receive their corresponding items $v_{i1},v_{i2},v_{i3}$ of $v_1,\ldots,v_N$. Therefore, each allocation of all items that extends $\pi_2$ by using the mechanism gives item $u_{i1}$ or item $u_{i2}$ to one of the agents $a_{3\cdot(i-1)+1},a_{3\cdot(i-1)+2},$ $a_{3\cdot(i-1)+3}$ as their second item. As a consequence, in each such allocation, there is another agent with zero items after round $3\cdot N+1$. We conclude that each such extension of $\pi_2$ is not a perfect allocation in $\mathcal{E}_{G}$. \hfill \myqed
\end{proof}

\begin{mytheorem}\label{thm:two} 
With $n>2$ agents, 0/1 utilities and the \myblike\ mechanism, problem {\sc ExactUtility} is in $\#\P$-hard under arithmetic reductions. 
\end{mytheorem}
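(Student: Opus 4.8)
The plan is to reduce from the problem of counting perfect matchings in a $3$-regular bipartite graph $G$, which is $\#\P$-hard, to \myexact\myutility\ on the instance $\mathcal{E}_G$ of Reduction~\ref{red:two}, asking for the expected utility $u_{(3N+1)m}(o)$ of the special agent $a_{3\cdot N+1}$. First I would observe that $a_{3\cdot N+1}$ likes only the two last-arriving items $w$ and $x$, each with utility $1$, and that $w$ is liked by no other agent, so $a_{3\cdot N+1}$ always receives $w$. Hence $u_{(3N+1)m}(o) = 1 + p_{3N+1}(3N+2,o)$, and it suffices to pin down the probability that $a_{3\cdot N+1}$ receives $x$.

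Next I would do a dichotomy on the (random) allocation $\sigma$ of the first $3\cdot N+1$ items of $o$, i.e.\ the items $v_1,\ldots,v_N,u_{11},u_{12},\ldots,u_{N1},u_{N2},w$. If $\sigma$ is a \emph{perfect allocation} in the sense of Lemma~\ref{lem:one} (each of $a_1,\ldots,a_{3N},a_{3N+1}$ holds exactly one item), then every agent liking $x$ — precisely the $3\cdot N+1$ agents $a_1,\ldots,a_{3N},a_{3N+1}$ — currently holds one item, so all of them are feasible for $x$ and $a_{3\cdot N+1}$ receives $x$ with conditional probability $1/(3N+1)$. If $\sigma$ is not perfect, then since the $3N$ edge items $v_1,\ldots,v_N,u_{11},\ldots,u_{N2}$ are distributed among the $3N$ edge agents, some edge agent holds $0$ items; as this agent likes $x$ while $a_{3\cdot N+1}$ holds one item ($w$), $a_{3\cdot N+1}$ is infeasible for $x$ and receives it with probability $0$. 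Therefore $p_{3N+1}(3N+2,o) = \tfrac{1}{3N+1}\Pr[\sigma\text{ is a perfect allocation}]$.

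The remaining step is to evaluate this probability, and the key claim is that every perfect allocation occurs with the \emph{same} probability $1/6^N$ under \myblike. I would trace the mechanism along any perfect allocation: while $v_1,\ldots,v_N$ arrive, each $v_\ell$ is liked by three edge agents all still holding $0$ items (their ``$v$-groups'' are pairwise disjoint and no earlier item is liked by them), so each step contributes a factor $1/3$; for each vertex $u_i$, when $u_{i1}$ arrives exactly two of its three edge agents hold $0$ items (the third already received its $v$-item, as forced by perfectness), contributing a factor $1/2$, and then $u_{i2}$ goes deterministically to the remaining $0$-agent; finally $w$ goes deterministically to $a_{3\cdot N+1}$. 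This yields $p(\sigma)=(1/3)^N(1/2)^N = 1/6^N$. Combining with Lemma~\ref{lem:one}, which gives exactly $2^N$ perfect allocations per perfect matching of $G$, we get $\Pr[\sigma\text{ perfect}] = 2^N\,6^{-N}\,\mathrm{pm}(G)$, where $\mathrm{pm}(G)$ is the number of perfect matchings of $G$, hence
\[
u_{(3N+1)m}(o) = 1 + \frac{2^N\,\mathrm{pm}(G)}{(3N+1)\,6^N},
\qquad\text{so}\qquad
\mathrm{pm}(G) = \bigl(u_{(3N+1)m}(o)-1\bigr)\,(3N+1)\,3^N .
\]
Since $\mathcal{E}_G$ has size polynomial in $G$, uses only $0/1$ utilities and more than two agents, and the last identity is a single arithmetic post-computation, an oracle for \myexact\myutility\ solves counting perfect matchings, giving the claimed $\#\P$-hardness under arithmetic reductions.

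I expect the main obstacle to be the uniform-probability claim $p(\sigma)=1/6^N$: one must argue carefully that along any prefix leading to a perfect allocation the feasible set has size exactly $3$ for each $v_\ell$ and exactly $2$ then $1$ for the pair $u_{i1},u_{i2}$, using that perfectness forbids a $u_i$-group agent from acquiring a second item before round $3N+1$. The combinatorial correspondence with perfect matchings and the $2^N$ multiplicity are already supplied by Lemma~\ref{lem:one}, so the theorem is essentially a bookkeeping step on top of it.
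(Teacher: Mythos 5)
Your proposal is correct and follows essentially the same route as the paper: the same reduction $\mathcal{E}_G$, the same dichotomy into perfect versus non-perfect allocations of the first $3\cdot N+1$ items, the same probability $(1/3^N)\cdot(1/2^N)$ per perfect allocation and conditional probability $1/(3\cdot N+1)$ for item $x$, combined with Lemma~\ref{lem:one}. You merely spell out a few details the paper leaves implicit (why non-perfect allocations contribute zero, and the explicit inversion recovering $\mathrm{pm}(G)$), which is fine.
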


\begin{proof} Let us consider allocation $\pi=\pi(3\cdot N+1,o)$ of the first $3\cdot N+1$ items in $o$ in which each agent among $a_1,\ldots,a_{3\cdot N},a_{3\cdot N+1}$ receives exactly one item. Note that agent $a_{3\cdot N+1}$ gets item $x$ with positive conditional probability only given such allocations because all agents like item $x$. By the preference structure, we conclude that $\pi$ occurs with probability $p(\pi)=(1/3^N)\cdot(1/2^N)$. The conditional probability $p_i(x|\pi)$ of agent $a_{3\cdot N+1}$ for item $x$ given $\pi$ is equal to $1/(3\cdot N+1)$ because all agents $a_1,\ldots,a_{3\cdot N},a_{3\cdot N+1}$ like item $x$. The conditional probability of agent $a_{3\cdot N+1}$ for item $x$ is $0$ given any other allocation. Therefore, $p_{3\cdot N+1}(x,o)$ is equal to $(1/3^N)\cdot(1/2^N)\cdot(1/(3\cdot N+1))$ multiplied by the number of allocations such as $\pi$ in which agent $a_{3\cdot N+1}$ is feasible for item $x$. Finally, the expected utility ${u}_{(3\cdot N+1)(3\cdot N+3)}(o)=p_{3\cdot N+1}(w,o)+p_{3\cdot N+1}(x,o)$. We have that $p_{3\cdot N+1}(w,o)=1$ because only agent $a_{3\cdot N+1}$ likes item $w$ and the mechanism allocates each item to an agent. The result follows by Lemma~\ref{lem:one}.\hfill \myqed
\end{proof}

\subsubsection{Necessary Outcomes}

The tractability of the exact allocations of the \mylike\ mechanism entails the tractability of its necessary allocations. By Observation~\ref{obs:one}, we conclude the next immediate result.

\begin{myobservation}\label{obs:two} 
With general utilities and the \mylike\ mechanism, problem {\sc NecessaryUtility} is in $\P$. 
\end{myobservation}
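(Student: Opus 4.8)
The plan is to derive {\sc NecessaryUtility} as an immediate corollary of Observation~\ref{obs:one} rather than to build any new machinery. Recall that {\sc NecessaryUtility} asks, for a fixed ordering $o$ and a given rational threshold $k$, whether $u_{im}(o)\geq k$ holds. The key observation is that, because the ordering $o$ is fixed, $u_{im}(o)$ is a single well-defined rational number (the expectation already averages over the internal randomness of the mechanism), not a worst case over orderings. I would therefore first compute this exact value and then perform one comparison against $k$.

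First I would invoke Observation~\ref{obs:one}, which shows that under the \mylike\ mechanism the exact expected utility admits the closed form $u_{im}(o)=\sum_{j=1}^m (1/n_j)\cdot u_{ij}$, where $n_j$ is the number of agents that like item $o_j$. The feature of \mylike\ exploited there is that it is memoryless: an arriving item goes to a uniformly random agent among those liking it, independently of the past allocation, so the probability that $a_i$ receives $o_j$ is $1/n_j$ when $a_i$ likes $o_j$ and $0$ otherwise. Each $n_j$ can be read off from $U$ in $\mathcal{O}(n)$ time, so the whole sum is computable in $\mathcal{O}(m\cdot n)$ time and yields an exact rational of size polynomial in the input.

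Having computed the rational $u_{im}(o)$ in polynomial time, the decision is immediate: I would answer \emph{yes} exactly when $u_{im}(o)\geq k$ and \emph{no} otherwise. Since both $u_{im}(o)$ and $k$ are rationals of polynomial bit-length, this comparison is a single arithmetic operation on polynomially sized numbers, so the overall procedure runs in polynomial time and places {\sc NecessaryUtility} in $\P$.

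I anticipate no genuine obstacle, as the statement is a direct consequence of the exact-value tractability of Observation~\ref{obs:one}. The only point warranting care is conceptual rather than computational: one must not conflate this expected-value threshold question with a worst-case guarantee over realized allocations. Here the ordering is fixed and $u_{im}(o)$ already incorporates the averaging over the mechanism's randomness, so testing $u_{im}(o)\geq k$ amounts to nothing more than comparing a computed rational against $k$.
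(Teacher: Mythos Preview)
Your proposal is correct and mirrors the paper's own reasoning: the paper also derives Observation~\ref{obs:two} as an immediate corollary of Observation~\ref{obs:one}, noting that tractability of the exact utility entails tractability of the threshold question. Your write-up simply makes explicit the comparison step that the paper leaves implicit.
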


We next focus on the necessary outcomes of the \myblike\ mechanism. We give a Karp reduction from \emph{minimum size maximal matching problem} to the negation of {\sc NecessaryUtility}. The minimum size maximal matching problem is shown to be $\NP$-hard on subdivision graphs of degree at most 3 in \cite{horton1993}.

\begin{myreduction}\label{red:three}\em Let us have a subdivision graph $G$ of degree at most 3 and integer $r$. The graph $G$ is bipartite with vertices $u_1,\ldots,u_N$ of degree exactly 2 and vertices $v_1,\ldots,v_M$ of degree at most 3. WLOG, we can assume that $N\geq M$ and there are no two vertices from $U$ that are connected to the same two vertices from $V$. We construct an allocation instance $\mathcal{P}_{G,r}$ as follows:

\begin{itemize}[noitemsep]
\item {\bf Agents:} 2 agents $u_{i1}$, $u_{i2}$ per $u_i$ and agents $a_1,\ldots,a_{N-r}$, $b_1,\ldots,b_M$ and $c$ (i.e. $3\cdot N+M-r+1$ agents),
\item {\bf Items:} 1 item per $v_j$ and items $x_1,\ldots,x_N$, $y_1,\ldots,y_N$, $z_1,\ldots,z_{N-r}$ and $w$ (i.e. $3\cdot N+M-r+1$ items), 
\item {\bf Non-zero utilities:} for each $i\in[1,N],j\in\lbrace 1,2\rbrace$, {\bf agent} $u_{ij}$ has utility 1 for items $x_i,v_{ij},y_i,z_1,\ldots,z_{N-r}$; for each $i\in[1,N-r]$, {\bf agent} $a_i$ has utility 1 for items $x_1,\ldots,x_N$; {\bf agents} $b_1,\ldots,b_M$ have each utility 1 for item $w$; {\bf agent} $c$ has utility 1 for items $z_{N-r},w$, and
\item {\bf Ordering:} $o=(x_1\ldots x_Nv_1\ldots v_My_1\ldots y_Nz_1\ldots z_{N-r}w)$. 
\end{itemize}

\end{myreduction}

The expected utility of each of the agents $b_1,\ldots,b_M$ is at least $1/M$ iff $p_{c}(w,o)=0$. This observation holds because each of the agents $b_1$ to $b_M$ have equal utilities for items in which case they receive item $w$ with the same probability which apparently is also equal to their expected utility as this is the only item they like. Theorem~\ref{thm:three} follows from this observation.

\begin{mytheorem}\label{thm:three} 
With $n>2$ agents, 0/1 utilities and the \myblike\ mechanism, problem {\sc NecessaryUtility} is in $\coNP$-hard under Turing reductions. 
\end{mytheorem}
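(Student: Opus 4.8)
The plan is to invoke Reduction~\ref{red:three} together with the observation stated just before the theorem, which already reduces the question to deciding whether $p_{c}(w,o)=0$: on instance $\mathcal{P}_{G,r}$ one asks {\sc NecessaryUtility} for agent $b_{1}$ with threshold $k=1/M$, and by that observation a positive answer is equivalent to $p_{c}(w,o)=0$. Since the minimum size maximal matching problem is $\NP$-hard on subdivision graphs of degree at most $3$, it then suffices to establish the combinatorial equivalence
\[
p_{c}(w,o)=0 \quad\Longleftrightarrow\quad G \text{ has no maximal matching with at most } r \text{ edges,}
\]
i.e.\ the query decides the complement of the $\NP$-hard source problem; this is a Karp, hence Turing, reduction from that complement and yields the claimed $\coNP$-hardness.

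The technical core is an analysis of the \myblike\ execution on $o=(x_1\ldots x_N\,v_1\ldots v_M\,y_1\ldots y_N\,z_1\ldots z_{N-r}\,w)$, carried out phase by phase. First, because $c$ likes only $z_{N-r}$ and $w$ and every $b_j$ likes only $w$, agent $c$ gets $w$ with positive probability iff with positive probability $c$ still holds $0$ items when $z_{N-r}$ arrives, which in turn happens iff with positive probability some agent $u_{ij}$ also holds $0$ items then (otherwise $c$ is the unique minimum-load bidder for $z_{N-r}$ and receives it, so cannot receive $w$). Second, since each $u_{ij}$ likes a unique $v$-item, the $v$-allocations do not change one another's relevant loads, so the loads of the $u$-agents when item $v_{\ell}$ arrives are determined entirely by the $x$-phase; and for each pair $u_{i1},u_{i2}$ the item $y_i$ forces at least one of them to be non-empty, so a gadget $u_i$ can leave an agent empty through the $y$-phase only if $x_i$ went to some $a_{\cdot}$ (``$u_i$ is $x$-freed'') and moreover neither $v_{i1}\to u_{i1}$ nor $v_{i2}\to u_{i2}$, and then only with probability $1/2$. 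Third, the fillers $z_1,\ldots,z_{N-r-1}$ decrease the number of empty $u$-agents by exactly $N-r-1$, so $p_c(w,o)>0$ iff there is a positive-probability prefix leaving at least $N-r$ agents $u_{ij}$ empty after the $y$-phase; as there are only $N-r$ agents $a_{\cdot}$, this forces exactly the $N-r$ $x$-freed gadgets, and forces all of them to be ``$v$-avoided'' in the above sense.

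The last step identifies such prefixes with small maximal matchings. Write $T\subseteq\{u_1,\ldots,u_N\}$ for the set of $x$-freed $U$-vertices ($|T|=N-r$) and $S$ for the remaining $U$-vertices ($|S|=r$). For the ``only if'' direction, from a positive-probability prefix as above one reads off, for each $V$-vertex $v$ adjacent to some vertex of $T$, the $U$-vertex $g(v)$ whose slot at $v$ received item $v$; using that item $v$ must go to a minimum-load bidder and that both slots of a $T$-vertex are empty, one checks that $g(v)$ is always in $S$ and that $g$ is injective, so $\{(g(v),v)\}$ is a matching of $G$, and because every $V$-neighbour of a $T$-vertex is already matched, extending it to a maximal matching adds edges only on the $r-|{\rm im}\,g|$ unused vertices of $S$, giving a maximal matching of size $\le r$. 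For the ``if'' direction, given a maximal matching $\mu$ of $G$ with $|\mu|\le r$, put into $S$ the matched $U$-vertices plus $r-|\mu|$ further ones, $x$-free the rest; for a matched $u_i$ (to $v_{ij}$) send $x_i$ to $u_{i,3-j}$ and send item $v_{ij}$ to the (empty) slot $u_{ij}$, for an unmatched $V$-vertex send its item anywhere feasible (maximality guarantees all its $U$-neighbours are matched, so it cannot feed a $T$-gadget), and route the remaining $x$-items arbitrarily; one verifies each of these steps sends the item to a current minimum-load bidder, so the prefix has positive probability and leaves one empty agent per $T$-vertex, i.e.\ at least $N-r$ empty $u$-agents.

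I expect the main obstacle to be this last ``if'' direction: checking that the intended routing of all the $v_{\ell}$- and $y_i$-items is \emph{simultaneously} realizable with positive probability under \myblike, i.e.\ that at every step the intended recipient is among the minimum-load bidders given the previous intended steps. The facts that make it go through — that $v$-allocations never perturb the loads that matter for later $v$-items, that a vertex-item can be steered away from a $T$-gadget exactly when some other incident edge of $G$ keeps its own slot empty, and that maximality of $\mu$ supplies precisely the needed ``escape'' edges — are what the detailed case analysis must pin down, and keeping the double indexing $u_{ij}$/$v_{ij}$ straight throughout is the delicate bookkeeping part.
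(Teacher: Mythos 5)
Your proposal is correct and follows essentially the same route as the paper: it invokes Reduction~\ref{red:three}, uses the observation that {\sc NecessaryUtility} for $b_1$ with threshold $1/M$ decides $p_c(w,o)=0$, and establishes via a phase-by-phase analysis of the \myblike\ run that $p_c(w,o)>0$ iff $G$ has a maximal matching of cardinality at most $r$. The differences are minor: you extract the matching through the map $g$ on $V$-vertices adjacent to the $x$-freed gadgets and then extend it greedily over the unused $S$-vertices (the paper builds the matching directly from the $v$-items received as first items and argues maximality by contradiction), and you additionally spell out the matching-to-allocation direction that the paper omits as the easier one; the only slip is the harmless remark that an $x$-freed, $v$-avoided gadget leaves an agent empty ``only with probability $1/2$'' --- it does so with probability $1$, with each of its two slots being the empty one with probability $1/2$.
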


\begin{proof}
There is a maximal matching in $G$ of cardinality at most $r$ iff there is an allocation in $\mathcal{P}_{G,r}$ in which agent $c$ receives item $w$ iff $p_{c}(w,o)>0$. The second ``iff'' is trivial. We, therefore, focus on the first ``iff''. The ``only if'' direction is easier to show and, for reasons of space, we only show the more difficult ``if'' direction. Suppose next that $\pi$ is an allocation of all items in $\mathcal{P}_{G,r}$ in which agent $c$ receives item $w$. 

\begin{enumerate}[noitemsep,topsep=0pt]
\item Item $w$ is allocated in $\pi$ to agent $c$ as their first item. To see this, suppose they also get some items among $z_{N-r}$. Now, they would not be feasible when item $w$ arrives as agents $b_1,\ldots,b_M$ have zero items in $\pi$ and the mechanism would have given item $w$ to an agent among $b_1,\ldots,b_M$ and not to agent $c$.

\item Prior to item $w$ in $\pi$, agent $c$ have received zero items. Hence, items $z_1,\ldots,z_{N-r}$ are allocated in $\pi$ to $N-r$ agents as their first items. By the preferences, these agents are from different pairs among $u_{11},u_{12},\ldots,u_{N1},$ $u_{N2}$ because, for each pair of agents $u_{i1},u_{i2}$, either $u_{i1}$ or $u_{i2}$ is forced to get item $y_i$. WLOG, let us assume that agents $u_{11},\ldots,u_{(N-r)1}$ get items $z_1,\ldots,z_{N-r}$ in $\pi$.

\item Prior to item $z_1$ in $\pi$, agents $u_{11},\ldots,u_{(N-r)1}$ have zero items. Hence, $N-r$ items among $y_1,\ldots,y_N$ are allocated in $\pi$ to $u_{12},\ldots,u_{(N-r)2}$ as their first items. These items are $y_1,\ldots,y_{N-r}$. For $i$ in $[N-r+1,N]$, we note that item $y_i$ is allocated in $\pi$ to either $u_{i1}$ or $u_{i2}$ as their first or second item.

\item Prior to item $y_1$ in $\pi$, agents $u_{11},u_{12},\ldots,u_{(N-r)1},$ $u_{(N-r)2}$ have zero items. By the preferences, agents $a_1,\ldots,$ $a_{N-r}$ must then receive items $x_1,\ldots,x_{N-r}$ in $\pi$. For $i$ in $[N-r+1,N]$, item $x_i$ is allocated in $\pi$ to either $u_{i1}$ or $u_{i2}$, say $u_{i2}$. We conclude that agents $u_{(N-r+1)1},\ldots,u_{N1}$ have zero items prior to item $v_1$ in $\pi$. Moreover, only agents $u_{(N-r+1)1},u_{(N-r+1)2},\ldots,$ $u_{N1},u_{N2}$ receive items $v_1,\ldots,v_M$ in $\pi$. Finally, only $l\leq r$ agents among $u_{(N-r+1)1},\ldots,u_{N1}$ get items in $\pi$ among $v_1,\ldots,v_M$ as first items as some of these agents might like the same items among $v_1,\ldots,v_M$. WLOG, let these agents be $u_{(N-l+1)1},\ldots,u_{N1}$ and they are allocated in $\pi$ items $v_1,\ldots,v_l$ as first items.
\end{enumerate} 

The constructed set $\mu_{\pi}=\lbrace (u_{N-l+1},v_1),\ldots,(u_N,v_l)\rbrace$ contains only edges from the graph $G$ which are vertex-disjoint. Therefore, this set is a matching in $G$. Moreover, the cardinality of this set is $l$ at most $r$. We next show that $\mu_{\pi}$ is a maximal matching. For the sake of contradiction, suppose that $\mu_{\pi}$ remains a matching if we add a new edge to it, say $(u,v)$. The edge $(u,v)$ is vertex-disjoint with the edges in $\mu_{\pi}$. This means that vertex $u$ is not among $u_{N-l+1},\ldots,u_N$ and vertex $v$ is not among $v_1,\ldots,v_l$. Hence, vertex $u$ is among $u_{1},\ldots,u_{N-l}$. In the allocation $\pi$, agents $u_{11},u_{12},\ldots,u_{(N-r)1},u_{(N-r)2}$ do not receive any items among $v_1,\ldots,v_M$. This implies that all these agents are feasible for the items they like among $v_1,\ldots,v_M$ but they do not get them in $\pi$. As agents $u_{(N-l+1)1},\ldots,u_{N1}$ get items $v_1,\ldots,v_l$ as their first items, we conclude that some agents among $u_{(N-l+1)1},u_{(N-l+1)2}\ldots,u_{N1},u_{N2}$ receive items $v_{l+1},$ $\ldots,v_M$ as their second items. Therefore, it must be the case that all agents $u_{11},u_{12},\ldots,u_{(N-r)1},u_{(N-r)2}$ do not like any item among $v_{l+1},\ldots,v_M$. Otherwise, the mechanism would allocate some of these items to agents among $u_{11},u_{12},$ $\ldots,u_{(N-r)1},u_{(N-r)2}$. This is just the way in which the mechanism works. And, we reached a contradiction with the existence of the allocation $\pi$. Finally, in the graph $G$, vertices $u_{1},\ldots,u_{N-r}$ are connected only to vertices among $v_1,\ldots,v_l$. Hence, $v$ is among $v_1,\ldots,v_l$. This fact contradicts that $\mu_{\pi}\cup\lbrace (u,v)\rbrace$ is a matching. \hfill \myqed
\end{proof}

\subsection{The Case of $2$ Agents}

By Observations~\ref{obs:one} and~\ref{obs:two}, the outcomes of \mylike\ are tractable. Surprisingly, in contrast to Theorems~\ref{thm:one},~\ref{thm:two} and~\ref{thm:three}, the outcomes of \myblike\ become tractable with only two agents and when the ordering of items is fixed.

\begin{mytheorem}\label{thm:four} 
With $n=2$ agents, general utilities and the \myblike\ mechanism, problems {\sc ExactUtility} and {\sc NecessaryUtility} are in $\P$. 
\end{mytheorem}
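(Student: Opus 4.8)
The plan is to exploit the fact that, with two agents and the \myblike\ mechanism, the only feature of the current allocation that influences future decisions is the \emph{imbalance} $d_j := (\text{number of items } a_1 \text{ holds after round } j) - (\text{number of items } a_2 \text{ holds after round } j)$, and that the only feature of a bid that matters is whether it is positive. First I would spell out how the random variable $d_j$ evolves given $d_{j-1}$ and the type of item $o_j$ (liked by $a_1$ only, by $a_2$ only, by both, or by neither): in the first three cases $d_j$ is a deterministic function of $d_{j-1}$ (namely $d_{j-1}+1$, $d_{j-1}-1$, $d_{j-1}$ respectively), and in the ``both'' case the unique feasible agent is the one currently holding fewer items, so $d_j = d_{j-1} - \mathrm{sign}(d_{j-1})$ when $d_{j-1}\neq 0$, while when $d_{j-1}=0$ the item is split, giving $d_j = +1$ and $d_j = -1$ each with probability $1/2$. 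Along the way this also yields $\Pr[a_i \text{ receives } o_j]$ in closed form: it is $1$ or $0$ if only one agent likes $o_j$ (and $0$ if neither does), and it equals $\Pr[d_{j-1} < 0] + \tfrac{1}{2}\Pr[d_{j-1} = 0]$ for $a_1$ (symmetrically for $a_2$) when both like $o_j$.

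The crux is a lemma stating that the support $S_j$ of $d_j$ always has at most two elements; more precisely, $S_j$ is either a singleton $\{a\}$ or a pair $\{a, a+2\}$. I would prove this by induction on $j$: $S_0 = \{0\}$; the three deterministic ``non-both'' transitions merely translate $S_{j-1}$ or leave it fixed, hence preserve the shape; and a short exhaustive case analysis of the ``both'' transition --- on a singleton $\{a\}$ according to $\mathrm{sign}(a)$, and on a pair $\{a,a+2\}$ according to whether $a\ge 1$, $a\in\{0,-1,-2\}$, or $a\le -3$ --- shows the shape is again preserved, the only interesting cases being that $\{0\}$ becomes $\{-1,1\}$ and that $\{-1,1\}$ collapses back to $\{0\}$. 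I expect this case analysis to be the only real obstacle, essentially because one must rule out a third value ever appearing even though the split at $d=0$ is ``spreading''; once the invariant is in hand the rest is bookkeeping.

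Given the lemma, the algorithm maintains, for $j = 0, 1, \ldots, m$, the explicit list of at most two pairs $\big(a,\ \Pr[d_j = a]\big)$, updating it in $\mathcal{O}(1)$ rational arithmetic operations per round via the transition rules above, and simultaneously accumulates $u_{im}(o) = \sum_{j=1}^m u_{ij}\cdot\Pr[a_i \text{ receives } o_j]$. Every probability produced is a dyadic rational whose denominator divides $2^m$ (at most $m$ halvings can occur), so all numbers stay of size polynomial in the input; general utilities cost nothing because the $u_{ij}$ enter only through this final linear combination, while the mechanism inspects only the sign pattern of the bids. Hence $u_{im}(o)$ is computed exactly in polynomial time, which settles {\sc ExactUtility}, and a single comparison of $u_{im}(o)$ with $k$ then settles {\sc NecessaryUtility}.
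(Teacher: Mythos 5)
Your proposal is correct and is essentially the paper's own argument: the paper runs the same dynamic program over states $(p,q)$ with the invariant that at most two states $(p,q)$ and $(p-1,q+1)$ survive each round, which is exactly your invariant that the support of the imbalance $d_j=p-q$ is a singleton or a pair $\{a,a+2\}$, proved by the same sign-based case analysis of the ``both like it'' transition. The only cosmetic difference is that you parametrize by the difference rather than the pair of counts and make the negative-imbalance cases explicit where the paper appeals to symmetry.
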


\begin{proof}
We use a dynamic program. Each state $s=(p,q)$ in it encodes that agent $a_1$ has $p$ items, agent $a_2$ has $q$ items, and its probability $p(s)$. By induction, we show that there are at most 2 different states after each allocation round. In the base case, consider round $1$. There are at most 2 states after this round depending on whether both $a_1$ and $a_2$ or only one of them like the first item. In the hypothesis, consider round $j$ and suppose there are at most two states after round $j$. In the step case, consider round $j+1$. Now, there are two cases. In the first one, there is only one state after round $j$. The result follows by the base case. In the second case, there are two states after round $j$. Let these be $(p,q)$ and $(p-1,q+1)$ where $p+q=j$. If only one agent likes item $o_{j+1}$, each state transits into a new state and the result follows. If both $a_1$ and $a_2$ like item $o_{j+1}$, we consider four sub-cases depending on the difference $p-q$: (1) $(p,q)$ and $(p-1,q+1)$ for $p-q>2$, (2) $(q+2,q)$ and $(q+1,q+1)$ for $p-q=2$, (3) $(q+1,q)$ and $(q,q+1)$ for $p-q=1$ and (4) $(q,q)$ and $(q-1,q+1)$ for $p-q=0$. For sub-case (1), each state transits into one new state with the same probability. For sub-case (2), $(q+2,q)$ transits into $(q+2,q+1)$, and $(q+1,q+1)$ into $(q+2,q+1)$ and $(q+1,q+2)$. For sub-case (3), both states transit into the same new state with probability $1$. For sub-case (4), $(q,q)$ transits into $(q,q+1)$ and $(q+1,q)$, and $(q-1,q+1)$ into $(q,q+1)$. We conclude that there are at most two different states after round $j+1$ in each sub-case. 

The probability $p_{1}(j+2,o)$ is equal to $\sum_{s_{j+1}} p(s_{j+1})\cdot p(a_1$ gets $o_{j+2}|s_{j+1})$ where $s_{j+1}$ is such a state after round $j+1$ in which agent $a_1$ is feasible for item $o_{j+2}$. The conditional probability $p(a_1$ gets $o_{j+2}|s_{j+1})$ of agent $a_1$ for item $o_{j+2}$ is ($\mathrm{\romannumeral 1}$) $0$ or $1$ in sub-case (1), ($\mathrm{\romannumeral 2}$) $0$, $1/2$ or $1$ in sub-case (3) and ($\mathrm{\romannumeral 3}$) the probability of the state in which they are feasible in sub-cases (2) and (4). We can compute the states, their probabilities and hence the probabilities of agents and their utilities in $\mathcal{O}(m)$ space and time. \hfill \myqed
\end{proof}

\section{Manipulations}\label{sec:manipulations}

We next consider how agents can act strategically. The \mylike\ mechanism is strategy-proof and hence agents 
have an incentive to bid sincerely for items. In contrast, the \myblike\ mechanism is not strategy-proof and agents can
have an incentive to bid strategically for items \cite{aleksandrov2015ijcai}. We thus focus on strategic misreporting of bids with \myblike. In particular, we study the worst case when the utilities and the ordering of the items are known to the misreporting agent. Any complexity results, in this case, provide lower
bounds on the complexity in the case of partial
or probabilistic information. We formulate the next problems where ${u}_{im}({v}^i,o)$ denotes the utility of agent $a_i$ supposing their bid vector is ${v}^i=(v_{i1},\ldots,v_{im})$ and the other agents bid sincerely. Let ${u}^i=(u_{i1},\ldots,u_{im})$ denotes their sincere bid vector.

\begin{center}
\begin{tabular}{ccc}

\fbox{
\begin{minipage}{0.45\textwidth}
 {\sc ExactManipulation} \\
Input: $\mathcal{I}=(A,O,U,o)$, $a_i$, ${u}^i$, ${v}^i$. \\
Output: ${u}_{im}({v}^i,o)-{u}_{im}({u}^i,o)$.
\end{minipage}
}

& \hspace{0.1cm} &

\fbox{
\begin{minipage}{0.45\textwidth}
 {\sc NecessaryManipulation} \\
Input: $\mathcal{I}$=$(A,O,U,o)$, $a_i$, ${v}^i$, ${u}^i$, $k$ $\in$ $\mathbb{Q}$. \\
Question: ${u}_{im}({v}^i,o)-{u}_{im}({u}^i,o)\geq k$?
\end{minipage}
}

\end{tabular}
\end{center}

\begin{mytheorem}\label{thm:five} 
With $n>2$ agents, 0/1 utilities and the \myblike\ mechanism, problem {\sc ExactManipulation} is in $\#\P$-hard under arithmetic reductions. 
\end{mytheorem}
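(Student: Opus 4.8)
The plan is to give an arithmetic reduction from the problem of counting perfect matchings in $3$-regular bipartite graphs, reusing the instance $\mathcal{E}_G$ of Reduction~\ref{red:two} together with the analysis behind Lemma~\ref{lem:one} and Theorem~\ref{thm:two}. The key observation is that ``declining the last item in the ordering'' is a (deliberately non-beneficial) deviation whose sincere counterpart is precisely the $\#\P$-hard expected utility computed in Theorem~\ref{thm:two}. Hence an algorithm for {\sc ExactManipulation} would let us recover that utility, and therefore the number of perfect matchings.

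Concretely, given a $3$-regular bipartite graph $G$ with $N$ vertices per side, I would take the instance $\mathcal{E}_G$ of Reduction~\ref{red:two}, whose ordering ends $\ldots w x$, in which only agent $a_{3\cdot N+1}$ likes $w$ and every agent likes $x$. I set the manipulating agent to $a_i = a_{3\cdot N+1}$, its sincere bid $u^i$ to the $a_{3\cdot N+1}$-row of $U$ (that is, $1$ for $w$ and $x$ and $0$ elsewhere), and its strategic bid $v^i$ to $u^i$ with the $x$-coordinate set to $0$ (bid sincerely for $w$, decline $x$). I would then check that $v^i$ is a ``clean'' deviation: since $x$ is the last item, no round before it depends on whether $a_{3\cdot N+1}$ bids for $x$, so the distribution over partial allocations of the first $3\cdot N+1$ items is literally identical under $u^i$ and $v^i$; and $x$ is still bid for by all other agents, so no item is left unallocated. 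Under $v^i$, agent $a_{3\cdot N+1}$ receives $w$ with probability $1$ (sole bidder) and $x$ with probability $0$, so $u_{im}(v^i,o)=1$, a constant independent of $G$. Under $u^i$, the computation in the proof of Theorem~\ref{thm:two} gives $u_{im}(u^i,o)=1+p_{3\cdot N+1}(x,o)$, where by Lemma~\ref{lem:one} we have $p_{3\cdot N+1}(x,o) = \#\mathrm{PM}(G)\,/\,\bigl(3^{N}(3\cdot N+1)\bigr)$ and $\#\mathrm{PM}(G)$ denotes the number of perfect matchings of $G$.

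Therefore the output of {\sc ExactManipulation} on this instance is $u_{im}(v^i,o)-u_{im}(u^i,o) = -\,\#\mathrm{PM}(G)\,/\,\bigl(3^{N}(3\cdot N+1)\bigr)$, so $\#\mathrm{PM}(G)$ equals $-3^{N}(3\cdot N+1)$ times this output, a multiplier computable in polynomial time; this is the required arithmetic reduction. The produced instance still has $n = 3\cdot N+1 > 2$ agents, $0/1$ utilities, a fixed ordering, and uses \myblike, and counting perfect matchings of $3$-regular bipartite graphs is $\#\P$-hard, which gives the claim. Equivalently, the same construction can be phrased as a one-line arithmetic reduction directly from {\sc ExactUtility} for \myblike\ (which is $\#\P$-hard by Theorem~\ref{thm:two}), since $u_{im}(u^i,o)$ is exactly the utility that theorem shows hard to compute and $u_{im}(v^i,o)$ is the trivial constant $1$.

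I do not anticipate a genuine obstacle: all the combinatorial content already resides in Lemma~\ref{lem:one} and is simply inherited here. The only points that require care are bookkeeping ones --- verifying that zeroing the bid for the last item perturbs no earlier round and orphans no item, that $u_{im}(v^i,o)$ is indeed a $G$-independent constant so that the difference isolates $p_{3\cdot N+1}(x,o)$, and that the sign and scaling relating the {\sc ExactManipulation} output to $\#\mathrm{PM}(G)$ are polynomial-time computable (so the reduction is arithmetic rather than merely Turing).
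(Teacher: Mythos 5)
Your proof is correct and is essentially the paper's own argument: both reduce from Theorem~\ref{thm:two} by pairing one bid vector whose expected utility is a trivially computable constant with one whose expected utility is $1+p_{3\cdot N+1}(x,o)$, so that the {\sc ExactManipulation} output isolates $p_{3\cdot N+1}(x,o)$ and hence, via Lemma~\ref{lem:one}, the number of perfect matchings up to a polynomial-time computable factor. The only difference is which side carries the hard quantity: you have agent $a_{3\cdot N+1}$ decline the final item $x$ itself, so the deviation is the trivial side and no change to $\mathcal{E}_G$ is needed, whereas the paper inserts a dummy item $z$ (liked only by $a_{3\cdot N+1}$) just before $x$ and has them decline $z$, making the sincere report the trivial side --- a choice the paper then reuses in the discussion after the theorem to exhibit a \emph{profitable} manipulation of the same shape.
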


\begin{proof}
Consider instance $\mathcal{E}_{G}$. Let us modify this instance a bit. We add one new item $z$ between items $w$ and $x$ in the ordering $o$ such that only agent $a_{3\cdot N+1}$ likes $z$ with 1. Let $\mathcal{F}_G$ denote this new instance. Suppose that all agents in $\mathcal{F}_G$ bid sincerely. Thus, agent $a_{3\cdot N+1}$ receives each of the items $w$ and $z$ each with probability 1 because they are the only agent who likes them. However, they receive item $x$ with probability 0. Therefore, ${u}_{(3\cdot N+1)(3\cdot N+3)}({u}^{(3\cdot N+1)},o)=2$. Suppose that all agents in $\mathcal{F}_G$ bid sincerely except agent $a_{3\cdot N+1}$ who bids strategically 0 for item $z$. Let ${v}^{(3\cdot N+1)}$ be their bidding vector in this case. We can now remove item $z$ because no agent bids positively for it. But, then we obtain instance $\mathcal{E}_G$. By Theorem~\ref{thm:two}, we have ${u}_{(3\cdot N+1)(3\cdot N+3)}({v}^{(3\cdot N+1)},o)=1+p_{3\cdot N+1}(x,o)$. The instance of {\sc ExactManipulation} uses instance $\mathcal{F}_G$, agent $a_{3\cdot N+1}$ and vectors ${u}^{(3\cdot N+1)}$ and ${v}^{(3\cdot N+1)}$. Its hardness follows by Theorem~\ref{thm:two}. \hfill \myqed
\end{proof}

Observe that the truthful report of agent $a_{3\cdot N+1}$ in the proof of Theorem~\ref{thm:five} leads to their utility being $2$ whereas their insincere report leads to their utility being at most $2$. Hence, their strategic move cannot lead to an increase in their utility but the computation of the exact difference in utility is intractable. However, as we discuss next, computing an exact profitable insincere report that leads to such an increase is also intractable.

Necessary manipulations might be easy even when exact manipulations are hard. For example, in the proof of Theorem~\ref{thm:five}, suppose that agent $a_{3\cdot N+1}$ has cardinal utility for item $x$ that is strictly greater than $(3^N).(3N+1)$. If they bid sincerely, their expected utility is $2$. If they bid strategically zero for item $z$, their expected utility is strictly greater than $2$. This \emph{necessary} increase can be decided in polynomial time but computing the \emph{exact} increase is intractable. However, necessary manipulations are also in general not always easy even if we ask merely for any increase in the expected utility of a given agent. 

\begin{mytheorem}\label{thm:six}
With $n>2$ agents, 0/1 utilities and the \myblike\ mechanism, problem {\sc NecessaryManipulation} is in $\coNP$-hard under Turing reductions.
\end{mytheorem}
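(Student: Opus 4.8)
The plan is to mimic the proof of Theorem~\ref{thm:three} and give a Karp reduction from the $\NP$-hard \emph{minimum size maximal matching problem} to the \emph{negation} of {\sc NecessaryManipulation}, re-using verbatim the instance $\mathcal{P}_{G,r}$ of Reduction~\ref{red:three}. Given a subdivision graph $G$ of degree at most $3$ and an integer $r$, the {\sc NecessaryManipulation} instance I would build consists of $\mathcal{P}_{G,r}$, the agent $c$, its sincere bid vector ${u}^c$ (which assigns $1$ to items $z_{N-r}$ and $w$ and $0$ elsewhere), the strategic vector ${v}^c$ that agrees with ${u}^c$ everywhere except that agent $c$ now bids $0$ for the last item $w$, and the threshold $k=0$.

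First I would pin down the difference ${u}_{cm}({v}^c,o)-{u}_{cm}({u}^c,o)$ exactly. The point is that ${v}^c$ and ${u}^c$ differ only in the coordinate of the last item $w$; since \myblike\ allocates each arriving item online, using only the current allocation and no information about future items, agent $c$'s bid on $w$ is irrelevant to every round that precedes $w$. Hence, with all other agents sincere in both profiles, the distribution over allocations of the first $3\cdot N+M-r$ items is the same under ${u}^c$ and ${v}^c$, so $p_c(z_{N-r},o)$ coincides in the two profiles, whereas $p_c(w,o)$ drops to $0$ under ${v}^c$ because agent $c$ is no longer feasible for $w$. Therefore ${u}_{cm}({v}^c,o)-{u}_{cm}({u}^c,o)=-p_c(w,o)\le 0$; as in the proof of Theorem~\ref{thm:five}, this manipulation never increases agent $c$'s utility.

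Next I would apply Theorem~\ref{thm:three}, which says that in $\mathcal{P}_{G,r}$ there is a maximal matching in $G$ of cardinality at most $r$ iff $p_c(w,o)>0$. Combining the two steps, the constructed {\sc NecessaryManipulation} instance answers ``no'' --- i.e.\ the manipulation strictly lowers $c$'s utility, $-p_c(w,o)<0$ --- exactly when $G$ has a maximal matching of size at most $r$. This yields the claimed Karp reduction from the minimum size maximal matching problem to the negation of {\sc NecessaryManipulation}, so {\sc NecessaryManipulation} is $\coNP$-hard, and since the reduction merely complements the answer the hardness holds under Turing reductions; the instance already has $0/1$ utilities and $n>2$ agents. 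If one wants the threshold to concern a genuine (albeit non-positive) change, the same construction works with any fixed $k<0$ smaller in absolute value than the least positive value attainable by $p(\pi)\cdot p_c(w|\pi)$ over the discrete allocations $\pi$ in which $c$ receives $w$ --- a polynomial-time computable, exponentially small bound --- since then ``increase $\ge k$'' is again equivalent to $p_c(w,o)=0$.

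I expect the main obstacle to be one of care rather than depth. One must check that changing only the $w$-coordinate of $c$'s bid is a legitimate strategic report and that it leaves every earlier round completely untouched (immediate because $w$ is last and the mechanism ignores future items), so that the utility difference is \emph{exactly} $-p_c(w,o)$ and not some quantity entangled with the earlier rounds; this is what lets Theorem~\ref{thm:three} be invoked off the shelf. The mildly counterintuitive feature worth stating explicitly is that the hardness is of $\coNP$ type precisely because the relevant manipulation here is (weakly) \emph{detrimental}: deciding whether it avoids any loss is the complement of the $\NP$-hard matching question, exactly mirroring how Theorem~\ref{thm:three} was obtained as the negation of {\sc NecessaryUtility}.
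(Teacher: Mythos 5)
Your proposal is correct and follows essentially the same route as the paper's own proof: the same instance $\mathcal{P}_{G,r}$, the same strategic bid of $0$ for the last item $w$, the same threshold $k=0$, and the same observation that the utility difference equals $-p_c(w,o)$, so the answer is ``yes'' iff $p_c(w,o)=0$, which by the argument behind Theorem~\ref{thm:three} is the complement of the maximal-matching question. Your extra checks (that altering only the bid on the final item leaves all earlier rounds untouched) make explicit a step the paper leaves implicit, but do not change the argument.
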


\begin{proof}
Consider instance $\mathcal{P}_{G,r}$. Suppose all agents bid sincerely. Hence, ${u}_{c(3N+M-r+1)}$ $({u}^{c},o)=p_{c}(z_{N-r},o)+p_c(w,o)$. Suppose all agents bid sincerely except agent $c$ who bids strategically 0 for item $w$. Let their bidding vector be ${v}^{c}$. We have that ${u}_{c(3N+M-r+1)}({v}^{c},o)=p_{c}(z_{N-r},o)$. The instance of {\sc NecessaryManipulation} uses as input instance $\mathcal{P}_{G,r}$, agent $c$, vectors ${v}^{c}$ and ${u}^{c}$, and rational number $k=0$. We conclude that ${u}_{c(3N+M-r+1)}({v}^{c},o)-{u}_{c(3N+M-r+1)}({u}^{c},o)\geq 0$ iff $p_c(w,o)=0$. The result follows by Theorem~\ref{thm:three}. \hfill \myqed
\end{proof}

Another definition of the manipulation problem is whether a player can possibly increase their utility by insincere reporting, rather than computing the necessary or exact gain. Observe that in the proof of Theorem~\ref{thm:six}, we have that ${u}_{c(3N+M-r+1)}({u}^{c},o)-{u}_{c(3N+M-r+1)}({v}^{c},o)>0$ iff $p_c(w,o)>0$. We conclude that possible manipulations are also intractable in general by the proof of Theorem~\ref{thm:three}. Finally, by Theorem~\ref{thm:four}, we conclude that possible, necessary and exact manipulations are easy with just two agents and items arriving from a fixed ordering. By Theorem~\ref{thm:one} and the discussion after it, we conclude that necessary and exact manipulations are hard with two agents and items arriving from a distribution.

\section{Related Work and Conclusion}\label{sec:rel} 

We studied the worst-case computational complexity of possible, necessary and exact outcomes returned by the \mylike\ and \myblike\ mechanisms supposing agents act sincerely. With \mylike, there is no benefit for agents to act strategically. With \myblike, the agents might be strategic but we proved that computing a manipulation is computationally intractable in general. Some results are however tractable for the case of 2 agents. Our study of the online allocations returned by the \mylike\ and
\myblike\ mechanisms is in-line with many results in offline fair
division, voting theory and partial tournaments where possible, necessary and exact outcomes play crucial role;
see e.g. \cite{aziz2015pnw,aziz2015pn,bachrach2010,xia2011}. Our results provide a stepping stone towards better understanding strategic behavior. A number of works already considered such behavior for offline mechanisms; see e.g. \cite{aziz2015e,bouveret2014}. Another interesting future directions would be to estimate the outcomes of our mechanisms or to look at fixed-parameter tractable algorithms for these problems \cite{downey2013,jerrum1989,saban2015}.

\bibliographystyle{splncs}
\bibliography{outcomes}

\end{document}